\title{Cross-entropy optimisation of importance sampling parameters for statistical model checking}
\author{Cyrille J\'egourel, Axel Legay \and Sean Sedwards\thanks{To whom correspondence should be addressed.}}
\institute{INRIA Rennes - Bretagne Atlantique,\\
\email{\{cyrille.jegourel,axel.legay,sean.sedwards\}@inria.fr}}
\begin{document}
\maketitle
\begin{abstract}
Statistical model checking avoids the exponential growth of states associated with probabilistic model checking by estimating properties from multiple executions of a system and by giving results within confidence bounds. Rare properties are often very important but pose a particular challenge for simulation-based approaches, hence a key objective under these circumstances is to reduce the number and length of simulations necessary to produce a given level of confidence. Importance sampling is a well-established technique that achieves this, however to maintain the advantages of statistical model checking it is necessary to find good importance sampling distributions without considering the entire state space.

Motivated by the above, we present a simple algorithm that uses the notion of cross-entropy to find the optimal parameters for an importance sampling distribution. In contrast to previous work, our algorithm uses a low dimensional vector of parameters to define this distribution and thus avoids the often intractable explicit representation of a transition matrix. We show that our parametrisation leads to a unique optimum and can produce many orders of magnitude improvement in simulation efficiency. We demonstrate the efficacy of our methodology by applying it to models from reliability engineering and biochemistry.\end{abstract}

\section{Introduction}
The need to provide accurate predictions about the behaviour of complex systems is increasingly urgent. With computational power becoming ever-more affordable and compact, computational systems are inevitably becoming increasingly concurrent, distributed and adaptive, creating a correspondingly increased burden to check that they function correctly. At the same time, users expect high performance and reliability, prompting the need for equally high performance analysis tools and techniques.

The most common method to ensure the correctness of a system is by testing it with a number of test cases having predicted outcomes that can highlight specific problems. Testing techniques have been effective discovering bugs in many industrial applications and have been incorporated into sophisticated tools \cite{Godefroid2008}. Despite this, testing is limited by the need to hypothesise scenarios that may cause failure and the fact that a reasonable set of test cases is unlikely to cover all possible eventualities; errors and modes of failure may remain undetected and quantifying the likelihood of failure using a series of test cases is difficult.

Model checking is a formal technique that verifies whether a system satisfies a property specified in temporal logic under all possible scenarios. In recognition of non-deterministic systems and the fact that a Boolean answer is not always useful, {\em probabilistic} model checking quantifies the probability that a system satisfies a property. In particular, `numerical' (alternatively `exact') probabilistic model checking offers precise and accurate analysis by exhaustively exploring the state space of non-deterministic systems and has been successfully applied to a wide variety of protocols, algorithms and systems. The result of this technique is the exact (within limits of numerical precision) probability that a system will satisfy a property of interest, however the exponential growth of the state space limits its applicability. The typical $10^8$ state limit of exhaustive approaches usually represents an insignificant fraction of the state space of real systems that may have tens of orders of magnitude more states than the number of protons in the universe ($\sim10^{80}$).

Under certain circumstances it is possible to guarantee the performance of a system by specifying it in such a way that (particular) faults are impossible. Compositional reasoning and various symmetry reduction techniques can also be used to combat state-space explosion, but in general the size, unpredictability and heterogeneity of real systems \cite{Basu2010} make these techniques infeasible. Static analysis has also been highly successful in analysing and debugging software and other systems, although it cannot match the precision of quantitative analysis of dynamic properties achieved using probabilistic and stochastic temporal logic. 

While the state space explosion problem is unlikely to ever be adequately solved for all systems, simulation-based approaches are becoming increasingly tractable due to the availability of high performance hardware and algorithms. In particular, statistical model checking (SMC) combines the simplicity of testing with the formality and precision of numerical model checking; the core idea being to create multiple independent execution traces of the system and individually verify whether they satisfy some given property. By modelling the executions as a Bernoulli random variable and using advanced statistical techniques, such as Bayesian inference \cite{Jha2009} and hypothesis testing \cite{Younes2002}, the results are combined in an efficient manner to decide whether the system satisfies the property with some level of confidence, or to estimate the probability that it does.
Knowing a result with less than 100\% confidence is often sufficient in real applications, since the confidence bounds may be made arbitrarily tight. Moreover, statistical model checking may offer the only feasible means of quantifying the performance of many complex systems. Evidence of this is that statistical model checking has been used to find bugs in large, heterogeneous aircraft systems \cite{Basu2010}. Notable statistical model checking platforms include APMC \cite{Herault2004}, YMER \cite{Younes2005} and VESTA \cite{Sen2005}. Well-established numerical model checkers, such as PRISM \cite{Kwiatkowska2002} and UPPAAL \cite{Bengtsson1996}, are now also including statistical model checking engines to cope with larger models.

A key challenge facing statistical model checking is to reduce the length (steps and cpu time) and number of simulation traces necessary to achieve a result with given confidence. The current proliferation of parallel computer architectures (multiple cpu cores, grids, clusters, clouds and general purpose computing on graphics processors, etc.) favours statistical model checking by making the production of multiple independent simulation runs relatively easy. Despite this, certain models still require a large number of simulation steps to verify a property and it is thus necessary to make simulation as efficient as possible. Rare (unlikely) properties pose a particular problem for simulation-based approaches, since they are not only difficult to observe (by definition) but their probability is difficult to bound \cite{Heidelberger1995}.

The term `rare event' is ubiquitous in the literature, but here we specifically consider rare {\em properties} defined in temporal logic. This distinguishes rare states from rare paths that may or may not contain rare states. The distinction does not significantly alter the mathematical derivation of our algorithm, however it affects the applicability of simple heuristics that are able to find (reasonably) good importance sampling distributions. This is of relevance because our algorithm works by a process of iterative refinement, starting from an initial distribution that must produce at least a few traces that satisfy the property. In what follows we consider discrete space Markov models and present a simple algorithm to find an optimal set of importance sampling parameters, using the concept of minimum cross-entropy \cite{Kullback1968,Shore1980}. Our parametrisation arises naturally from the syntactic description of the model and thus constitutes a low dimensional vector in comparison to the state space of the model. We show that this parametrisation has a unique optimum and demonstrate its effectiveness on reliability and (bio)chemical models. We describe the advantages and potential pitfalls of our approach and highlight areas for future research.

\section{Importance sampling}

Our goal is to estimate the probability of a property by simulation and bound the error of our estimation. When the property is not rare there are standard bounding formulae (e.g., the Chernoff and Hoeffding bounds \cite{Chernoff1952,Hoeffding1963}) that relate absolute error, confidence and the required number of simulations to achieve them, {\em independent} of the probability of the property. As the property becomes rarer, however, absolute error ceases to be useful and it is necessary to consider relative error, defined as the standard deviation of the estimate divided by its expectation. With Monte Carlo simulation relative error is unbounded with increasing rarity \cite{Rubino2009}, but it is possible to bound the error by means of importance sampling \cite{Shahabuddin1994,Heidelberger1995}.

Importance sampling is a technique that can improve the efficiency of simulating rare events and has been receiving considerable interest of late in the field of statistical model checking (e.g., \cite{ClarkeZuliani2011,Barbot2012}). It works by simulating under an (importance sampling) distribution that makes a property more likely to be seen and then uses the results to calculate the probability under the original distribution by compensating for the differences. The concept arose from work on the `Monte Carlo method' \cite{MetropolisUlam1949} in the Manhattan project during the 1940s and was originally used to quantify the performance of materials and solve otherwise intractable analytical problems with limited computer power (see, e.g., \cite{Kahn1949}). For importance sampling to be effective it is necessary to define a `good' importance sampling distribution: (i) the property of interest must be seen frequently in simulations and (ii) the distribution of the paths that satisfy the property in the importance sampling distribution must be as close as possible to the distribution of the same paths in the original distribution (up to a normalising factor). The literature in this field sometimes uses the term `zero variance' to describe an optimal importance sampling distribution, referring to the fact that with an optimum importance sampling distribution all simulated paths satisfy the property and the estimator has zero variance. It is important to note, however, that a sub-optimal distribution may meet requirement (i) without necessarily meeting requirement (ii). Failure to consider (ii) can result in gross errors and overestimates of confidence (e.g. a distribution that simulates just one path that satisfies the given property). The algorithm we present in Section \ref{sec:algorithm} addresses both (i) and (ii).

Importance sampling schemes fall into two broad categories: {\em state dependent tilting} and {\em state independent tilting} \cite{DeBoer2000}. State dependent tilting refers to importance sampling distributions that individually bias (`tilt') every transition probability in the system. State independent tilting refers to importance sampling distributions that change classes of transition probabilities, independent of state. The former offers greatest precision but is infeasible for large models. The latter is more tractable but may not produce good importance sampling distributions. Our approach is a kind of {\em parametrised tilting} that potentially affects all transitions differently, but does so according to a set of parameters.

\subsection{Estimators}
Let $\Omega$ be a probability space of paths, with $f$ a probability density function over $\Omega$ and $z(\omega)\in\{0,1\}$ a function indicating whether a path $\omega$ satisfies some property $\phi$. In the present context, $z$ is defined by a formula of an arbitrary temporal logic over execution traces. The expected probability $\gamma$ that $\phi$ occurs in a path is then given by
\begin{equation}\gamma = \int_\Omega z(\omega)f(\omega)\;\textnormal d\omega\label{gamma}\end{equation}
and the standard Monte Carlo estimator of $\gamma$ is given by
$$\tilde{\gamma}=\frac{1}{N_\textnormal{\tiny MC}}\sum^{N_\textnormal{\tiny MC}}_{i=1}z(\omega_i)$$
$N_\textnormal{\tiny MC}$ denotes the number of simulations used by the Monte Carlo estimator and note that $z(\omega_i)$ is effectively the realisation of a Bernoulli random variable with parameter $\gamma$. Hence $\textnormal{Var}(\tilde{\gamma})=\gamma(1-\gamma)$ and for $\gamma\rightarrow 0$, $\textnormal{Var}(\tilde{\gamma})\approx\gamma$. 
Let $f'$ be another probability density function over $\Omega$, absolutely continuous with $zf$, then Equation \eqref{gamma} can be written
$$\gamma = \int_\Omega z(\omega)\frac{f(\omega)}{f'(\omega)}f'(\omega)\;\textnormal d\omega$$
$L=f/f'$ is the {\em likelihood ratio} function, so
\begin{equation}\gamma = \int_\Omega L(\omega)z(\omega)f'(\omega)\;\textnormal d\omega\label{gammaL}\end{equation}
We can thus estimate $\gamma$ by simulating under $f'$ and compensating by $L$:
$$\tilde{\gamma}=\frac{1}{N_\textnormal{\tiny IS}}\sum^{N_\textnormal{\tiny IS}}_{i=1} L(\omega_i)z(\omega_i)$$
$N_\textnormal{\tiny IS}$ denotes the number of simulations used by the importance sampling estimator.
The goal of importance sampling is to reduce the variance of the rare event and so achieve a narrower confidence interval than the Monte Carlo estimator, resulting in $N_\textnormal{\tiny IS}\ll N_\textnormal{\tiny MC}$. In general, the importance sampling distribution $f'$ is chosen to produce the rare property more frequently, but this is not the only criterion. The optimal importance sampling distribution, denoted $f^*$ and defined as $f$ conditioned on the rare event, produces only traces satisfying the rare property:
\begin{equation}f^* = \frac{zf}{\gamma}\label{optimal}\end{equation}
This leads to the term `zero variance estimator' with respect to $Lz$, noting that, in general, $\textnormal{Var}(f^*)\geq 0$.

In the context of statistical model checking $f$ usually arises from the specifications of a model described in some relatively high level language. Such models do not, in general, explicitly specify the probabilities of individual transitions, but do so implicitly by parametrised functions over the states. We therefore consider a class of models that can be described by guarded commands \cite{Dijkstra1975} extended with stochastic rates. Our parametrisation is a vector of strictly positive values $\lambda\in(\mathbb{R}^+)^n$ that multiply the stochastic rates and thus maintain the absolutely continuous property between distributions. Note that this class includes both discrete and continuous time Markov chains and that in the latter case our mathematical treatment works with the embedded discrete time process.

In what follows we are therefore interested in parametrised distributions and write $f(\cdot,\lambda)$, where $\lambda=\{\lambda_1,\dots,\lambda_n\}$ is a vector of parameters, and distinguish different density functions by their parameters. In particular, $\mu$ is the original vector of the model and $f(\cdot,\mu)$ is therefore the original density. We can thus rewrite Equation \eqref{gammaL} as
$$\gamma = \int_\Omega L(\omega)z(\omega)f(\omega,\lambda)\;\textnormal d\omega$$
where $L(\omega)=f(\omega,\mu)/f(\omega,\lambda)$. We can also rewrite Equation \eqref{optimal}
$$f^* = \frac{zf(\cdot,\mu)}{\gamma}$$
and write for the optimal parametrised density $f(\cdot,\lambda^*)$.
We define the optimum parametrised density function as the density that minimises the {\em cross-entropy} \cite{Kullback1968} between $f(\cdot,\lambda)$ and $f^*$ for a given parametrisation and note that, in general, $f^*\neq f(\cdot,\lambda^*)$.
\subsection{The cross-entropy method}
Cross-entropy \cite{Kullback1968} (alternatively {\em relative entropy} or Kullback-Leibler divergence) has been shown to be an effective directed measure of distance between distributions \cite{Shore1980}. With regard to the present context, it has also been shown to be useful in finding optimum distributions for importance sampling \cite{Rubinstein1999,DeBoer2000,Ridder2005}.

Given two probability density functions $f$ and $f'$ over the same probability space $\Omega$, the cross-entropy from $f$ to $f'$ is given by
\begin{eqnarray}
\textnormal{CE}(f,f') = \int_\Omega f(\omega)\log\frac{f(\omega)}{f'(\omega)}\,\textnormal{d}\omega &=& \int_\Omega f(\omega)\log f(\omega) - f(\omega)\log f'(\omega)\,\textnormal{d}\omega\nonumber\\
& = &\textnormal{H}(f) - \int_\Omega f(\omega)\log f'(\omega)\,\textnormal{d}\omega\label{KLdivergence}
\end{eqnarray}
where $\textnormal{H}(f)$ is the entropy of $f$. To find $\lambda^*$ we minimise $\textnormal {CE}(z(\omega)f(\omega,\mu),f(\omega,\lambda))$, noting that $\textnormal{H}(f(\omega,\mu))$ is independent of $\lambda$:
\begin{equation}\lambda^*=\mathop{\arg\max}_\lambda\int_\Omega z(\omega)f(\omega,\mu)\log f(\omega,\lambda)\,\textnormal d\omega\label{argmax}\end{equation}
Estimating $\lambda^*$ directly using Equation \eqref{argmax} is hard, so we re-write it using importance sampling density $f(\cdot,\lambda')$ and likelihood ratio function $L(\omega) = f(\omega,\mu)/f(\omega,\lambda')$:
\begin{equation}\lambda^*=\mathop{\arg\max}_\lambda\int_\Omega z(\omega) L(\omega)f(\omega,\lambda')\log f(\omega,\lambda)\,\textnormal d\omega\label{argmaxL}\end{equation}
Using Equation \eqref{argmaxL} we can construct an unbiased importance sampling estimator of $\lambda^*$ and use it as the basis of an iterative process to obtain successively better estimates:
\begin{equation}\tilde{\lambda^*}=\lambda^{(j+1)}=\mathop{\arg\max}_\lambda\sum_{i=1}^{N_j} z(\omega_i^{(j)})L^{(j)}(\omega_i^{(j)})\log f(\omega_i^{(j)},\lambda)\label{argmaxE}\end{equation}
$N^j$ is the number of simulation runs on the $j$\textsuperscript{th} iteration, $\lambda^{(j)}$ is the $j$\textsuperscript{th} set of estimated parameters, $L^{(j)}(\omega)= f(\omega,\mu)/f(\omega,\lambda^{(j)})$ is the $j$\textsuperscript{th} likelihood ratio function, $\omega_i^{(j)}$ is the $i$\textsuperscript{th} path generated using $f(\cdot,\lambda^{(j)})$ and $f(\omega_i^{(j)},\lambda)$ is the probability of path $\omega_i^{(j)}$ under the distribution $f(\cdot,\lambda^{(j)})$.

\section{A parametrised cross-entropy algorithm}\label{sec:algorithm}

We consider a system of $n$ guarded commands with vector of rate functions $K=(K_1,\dots,K_n)$ and corresponding vector of parameters $\lambda=(\lambda_1,\dots,\lambda_n)$. In any given state the probability that command $k$ is chosen is given by
$$\frac{\lambda_k K_k}{\langle K,\lambda\rangle}$$
where the notation $\langle\cdot,\cdot\rangle$ denotes a scalar product. For the purposes of simulation we consider a space of finite paths $\omega\in\Omega$. Let $U_k(\omega)$ be the number of transitions of type $k$ occurring in $\omega$. We therefore have
$$f(\omega,\lambda)=\prod_{k}^{n} (\lambda_k)^{U_k(\omega)} \prod_{s=1}^{U_k(\omega)}\frac{K^s_k}{\langle K^s,\lambda\rangle}$$
Here vector $K$ is indexed by $s$ to emphasise its state-dependence. The likelihood ratios are thus of the form
$$L^{(j)}(\omega)=\prod_{k}^{n}\left( \frac{\mu_k}{\lambda^{(j)}_k}\right)^{U_k(\omega)} \prod_{s=1}^{U_k(\omega)}\frac{\langle K^s,\lambda^{(j)}\rangle}{\langle K^s,\mu\rangle}$$
We substitute these expressions in the cross-entropy estimator Equation (\ref{argmaxE}) and for compactness substitute $z_i=z(\omega_i)$, $u_i(k)=U_k(\omega_i)$ and $l_i=L^{(j)}(\omega_i)$ to get
\begin{eqnarray}
& & \mathop{\arg\max}_{\lambda}\sum_{i=1}^{N} l_iz_i\log\prod_{k}^{n} \left(\lambda_k^{u_i(k)} \prod_{s=1}^{u_i(k)}\frac{K^{i,s}_k}{\langle K^{i,s},\lambda\rangle}\right)\\
\nonumber{} &=&\mathop{\arg\max}_{\lambda} \sum_{i=1}^{N} \sum_k^n l_iz_iu_i(k)\left(\mbox{log}(\lambda_k) + \sum_{s=1}^{u_i(k)}\mbox{log}(K^{i,s}_k) -\sum_{s=1}^{u_i(k)}\mbox{log}(\langle K^{i,s},\lambda\rangle)\right)
\end{eqnarray}
We partially differentiate with respect to $\lambda_k$ and get the non-linear system
\begin{equation}
\frac{\partial{f}}{\partial{\lambda_k}}(\lambda)=0 \Leftrightarrow \sum_{i=1}^{N} l_iz_i\left(\frac{u_i(k)}{\lambda_k} - \sum_{s=1}^{|\omega_i|}\frac{K_k^{i,s}}{\langle K^{i,s},\lambda\rangle}\right)=0\label{eq:partial}
\end{equation}
where $|\omega_i|$ is the length of the path $\omega_i$.
\begin{theorem}A solution of Equation (\ref{eq:partial}) is almost surely a maximum, up to a normalising scalar.
\end{theorem}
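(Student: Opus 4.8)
The plan is to show that the function being maximised is concave, so that every stationary point solving (\ref{eq:partial}) is automatically a global maximum, and then to account for the two qualifiers separately: the scaling freedom gives ``up to a normalising scalar'', and a genericity condition on the samples gives ``almost surely''. Let $G(\lambda)$ denote the objective inside the $\arg\max$ of (\ref{argmaxE}) after the substitutions, so that (\ref{eq:partial}) is exactly $\nabla G=0$.

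First I would record the scale invariance of $G$. Substituting $\lambda\mapsto c\lambda$ with $c>0$ shifts each $\log\lambda_k$ by $\log c$ and each $\log\langle K^{i,s},\lambda\rangle$ by $\log c$; since every sampled path contributes equally many numerator and denominator factors ($\sum_k u_i(k)=|\omega_i|$), these contributions cancel and $G(c\lambda)=G(\lambda)$. Hence the stationary set is a union of rays, which is precisely the ``up to a normalising scalar'' clause, and it is natural to work on a normalised slice, or equivalently on the quotient by this $\mathbb R^+$ action.

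The key step is the change of variables $\theta_k=\log\lambda_k$, legitimate because $\lambda\in(\mathbb R^+)^n$ makes it a diffeomorphism that preserves stationarity. In these coordinates the terms $u_i(k)\log\lambda_k$ become linear in $\theta$, while each $-\log\langle K^{i,s},\lambda\rangle$ becomes $-\log\sum_k K_k^{i,s}e^{\theta_k}$, the negative of a log-sum-exp function and hence concave. Since the weights $l_iz_i$ are nonnegative, $G$ is a nonnegative combination of concave terms plus a linear term, so $G$ is concave in $\theta$; the standard fact that a stationary point of a concave function is a global maximum then finishes the argument after transporting back through the diffeomorphism. I expect this to be the main obstacle: in the native coordinates $\lambda$ the function is not concave (its Hessian is a negative diagonal plus a positive-semidefinite rank sum, hence indefinite), so the statement is false without the logarithmic reparametrisation, and spotting those coordinates is the crux.

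Finally I would address ``almost surely'' through the Hessian in $\theta$. Writing $p_k^{i,s}$ for the simulated probability of choosing command $k$ at step $s$ of path $i$, this Hessian equals $-\sum_{i,s}l_iz_i\bigl(\operatorname{diag}(p^{i,s})-p^{i,s}(p^{i,s})^{\top}\bigr)$, a nonnegative combination of positive-semidefinite matrices each annihilating the all-ones vector $\mathbf 1$. Thus $\mathbf 1$ always lies in the kernel, matching the scale invariance, and strict concavity transverse to $\mathbf 1$ --- equivalently uniqueness of the optimum up to scaling --- holds exactly when these quadratic forms jointly have kernel $\operatorname{span}(\mathbf 1)$. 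This can fail only for degenerate samples (no sampled path satisfies $\phi$, so all weights vanish; or the commands enabled along the observed paths fail to couple every coordinate), a measure-zero event under the sampling distribution once a few satisfying traces are produced, which is where the ``almost surely'' enters.
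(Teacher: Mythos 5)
Your proposal is correct on the main claim and reaches it by a genuinely different route than the paper. The paper stays in the native coordinates $\lambda$: it decomposes each per-path Hessian as $H_i=G-D$ (Gram matrix minus diagonal), then invokes the \emph{asymptotic} identity $u_i(k)\approx\lambda_k\sum_s v_k^{(s)}$ --- a law-of-large-numbers statement, and exactly where its ``almost surely'' enters --- to rewrite the diagonal in terms of Gram entries, and finishes by completing a square, using that the entries $\langle v_k,v_{k'}\rangle$ are non-negative; the full Hessian $H=\sum_i l_iz_iH_i$ is then a positively weighted sum of negative semi-definite matrices. Your substitution $\theta_k=\log\lambda_k$ replaces this probabilistic step by an exact structural one: the objective becomes linear plus a non-negative combination of negative log-sum-exp terms, hence concave for \emph{every} finite sample, and since the diffeomorphism preserves stationarity, any solution of Equation~(\ref{eq:partial}) is a global maximum with no asymptotic qualifier at all. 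Your side remark that the exact $\lambda$-Hessian need not be negative semi-definite is also correct (a command that is enabled but never taken on the sampled paths contributes to the positive Gram part but not to the diagonal), and it pinpoints precisely the gap that the paper's ``asymptotically'' is covering; by the chain rule the two Hessians are congruent exactly at stationary points, which reconciles the two computations. Your kernel analysis of $\operatorname{diag}(p^{i,s})-p^{i,s}(p^{i,s})^{\top}$ likewise exhibits the scale-invariance direction $\mathds{1}$ more transparently than the paper's remark after the proof. What the paper's route buys is a Hessian expressed directly in the coordinates in which the update formula (\ref{eq:algorithm}) is stated; what yours buys is a proof of maximality that is exact rather than asymptotic.

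Two small caveats. First, ``negative diagonal plus positive-semidefinite, hence indefinite'' overstates: such a matrix is not automatically indefinite, it merely \emph{can} fail to be negative semi-definite (which is all you need for your point). Second, your closing claim that degenerate coupling is a measure-zero event is doubtful in general: in the paper's own repair model some commands genuinely never occur on satisfying paths with probability bounded away from zero (this is why the smoothing strategy exists), so the joint kernel can exceed $\operatorname{span}(\mathds{1})$ with positive probability. This does not damage your proof of the stated theorem, since your maximality argument never uses that qualifier; it only means your uniqueness discussion --- like the paper's --- remains heuristic. Note, though, that your reading of ``almost surely'' (sample non-degeneracy for uniqueness) differs from what the paper actually uses it for (the asymptotic diagonal identity); under your argument the theorem's ``almost surely'' becomes unnecessary for maximality, which is a strengthening, not an error.
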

\begin{proof}
Using a standard result, it is sufficient to show that the Hessian matrix in $\lambda$ is negative semi-definite. Consider $f_i$:
$$f_i(\lambda)=\sum_k u_i(k)\left(\mbox{log}(\lambda_k) + \sum_{s=1}^{u_i(k)}\mbox{log}(K^{i,s}_k) -\sum_{s=1}^{u_i(k)}\mbox{log}(\langle K^{i,s},\lambda\rangle)\right)$$
The Hessian matrix in $\lambda$ is of the following form with $v_k^{(s)}=\frac{K_k^{(s)}}{\langle K^{(s)},\lambda\rangle}$ and $v_k=(v_k^{(s)})_{1\leq s\leq U_k(\omega)}$:
$$H_i=G-D$$
where $G=(g_{kk'})_{1\leq k,k'\leq n}$ is the following Gram matrix
$$g_{kk'}=\langle v_k, v_{k'}\rangle$$
and $D$ is a diagonal matrix such that
$$d_{kk}=\frac{u_k}{\lambda_k^2}.$$
Note that asymptotically $d_{kk}=\frac{1}{\lambda_k}\sum_{s=1}^N v_k^{(s)}.$
We write $\mathds{1}_N=(1,\dots,1)$ for the vector of $N$ elements $1$, hence
$$d_{kk}=\frac{1}{\lambda_k}\langle v_k, \mathds{1}_N \rangle.$$
Furthermore, $\forall s, \sum_{k=1}^n\lambda_k v_k^{(s)} = 1$.
So, $\sum_{k'=1}^n\lambda_{k'} v_{k'} = \mathds{1}_N$.
Finally,
$$d_{kk}=\sum_{k'=1}^n \frac{\lambda_{k'}}{\lambda_k}\langle v_k, v_{k'}\rangle.$$
Let $x\in\mathbb{R}^n$. To prove the theorem we need to show that $-x^tHx\geq 0$.
\begin{eqnarray}
-x^tHx &= x^tDx - x^tGx\\
\nonumber{} &=&\sum_{k,k'} \frac{\lambda_{k'}}{\lambda_k}\langle v_k, v_{k'}\rangle x_k^2 - \sum_{k,k'} \langle v_k, v_{k'}\rangle x_kx_{k'}\\
\nonumber{} &=&\sum_{k<k'}\left( \left[ \frac{\lambda_{k'}}{\lambda_k}x_k^2 + \frac{\lambda_k}{\lambda_{k'}}x_{k'}^2 - 2x_kx_{k'}\right] \langle v_k, v_{k'}\rangle\right)\\
\nonumber{} &=&\sum_{k<k'}\left( \sqrt{\frac{\lambda_{k'}}{\lambda_k}}x_k - \sqrt{\frac{\lambda_k}{\lambda_{k'}}}x_{k'}\right)^2  \langle v_k, v_{k'}\rangle\\
\nonumber{} &\geq& 0
\end{eqnarray}
The Hessian matrix $H$ of $f$ is of the general form
$$H=\sum_{i=1}^{N}l_iz_iH_i$$
which is a positively weighted sum of non-positive matrices.
\qed\end{proof}
The Hessian is negative {\em semi}-definite because if $\lambda$ is a solution then $x\lambda, x\in\mathbb{R}^+$, is also a solution. The fact that there is a unique optimum, however, makes it conceivable to find $\lambda^*$ using standard optimising techniques such as Newton and quasi-Newton methods. To do so would require introducing a suitable normalising constraint in order to force the Hessian to be negative definite. In the case of the cross-entropy algorithm of \cite{Ridder2005}, this constraint is inherent because it works at the level of individual transition probabilities that sum to 1 in each state. We note here that in the case that our parameters apply to individual transitions, such that one parameter corresponds to exactly one transition, Equation (\ref{eq:algorithm}) may be transformed to Equation (9) of \cite{Ridder2005} by constraining $\langle K,\lambda\rangle=1$. Equation (9) of \cite{Ridder2005} has been shown in \cite{Ridder2010} to converge to $f^*$, implying that under these circumstances $f(\cdot,\lambda^*)=f^*$ and that it may be possible to improve our parametrised importance sampling distribution by increasing the number of parameters.
\subsection{The algorithm}
Equation (\ref{eq:partial}) leads to the following expression for $\lambda_k$:
\begin{equation}
\lambda_k=\frac{\sum_{i=1}^{N} l_iz_iu_i(k)}{\sum_{i=1}^{N} l_iz_i\sum_{s=1}^{|\omega_i|}\frac{K_k^s}{\langle K^s,\lambda\rangle}}\label{eq:lambdak}
\end{equation}
In this form the expression is not useful because the right hand side is dependent on $\lambda_k$ in the scalar product. Hence, in contrast to update formulae based on unbiased estimators, as given by Equation (\ref{argmaxE}) and in \cite{Ridder2005,DeBoer2000}, we construct an iterative process based on a biased estimator but having a fixed point that is the optimum:
\begin{equation}
\lambda_k^{(j+1)}=\frac{\sum_{i=1}^{N_j} l_iz_iu_i(k)}{\sum_{i=1}^{N_j} l_iz_i\sum_{s=1}^{|\omega_i|}\frac{K_k^s}{\langle K^s,\lambda^{(j)}\rangle}}\label{eq:algorithm}
\end{equation}
Equation (\ref{eq:algorithm}) can be seen as an implementation of Equation (\ref{eq:lambdak}) that uses the previous estimate of $\lambda$ in the scalar product, however it works by reducing the distance between successive distributions, rather than by explicitly reducing the distance from the optimum. We offer no proof of convergence here, but assert that if it converges it converges to $\lambda^*$.

To use the algorithm it is necessary to start with an initial simulation distribution $f(\cdot,\lambda^{(0)})$ and number of simulations $N_0$ that produce at least a few traces that satisfy the property. The choice of $\lambda^{(0)}$ and $N_0$ is highly dependent on the model and the property and must in general be established by heuristics or trial and error. When the number of parameters is small and the property is very rare, an effective strategy is to iterate the algorithm with relatively low $N_0$ and random parameters until a suitable trace is observed. If the model and property are similar to a previous combination for which parameters were found, those parameters are likely to provide a good initial estimate. Increasing the parameters associated to obviously small rates may help (along the lines of simple failure biasing \cite{Shahabuddin1994}), however the rareness of a property expressed in temporal logic may not always be related to low probabilities. Finding good initial distributions for arbitrary systems and temporal properties is the subject of ongoing work.

Given a sufficient number of traces from the first iteration, Equation \ref{eq:algorithm} should provide a better set of parameters. The expected behaviour is that on successive iterations the number of traces that satisfy the property increases, however it is important to note that the algorithm optimises the {\em quality} of the distribution and that the number of traces that satisfy the property is merely emergent of that. As has been noted, in general $f(\cdot,\lambda^*)\neq f^*$, hence it is likely that fewer than $100\%$ of traces will satisfy the property when simulating under $f(\cdot,\lambda^*)$. One consequence of this is that an initial set of parameters may produce more traces that satisfy the property than the final set (see, e.g., Figure \ref{fig:trues470}).

It is conceivable that certain guarded commands play no part in traces that satisfy the property, in which case Equation (\ref{eq:algorithm}) would make the corresponding parameter zero with no adverse effects. It is also conceivable that an important command is not seen on a particular iteration, but making its parameter zero would prevent it being seen on any subsequent iteration. To avoid this it is necessary to adopt a `smoothing' strategy \cite{Ridder2005} that reduces the significance of an unseen command without setting it to zero. The strategy adopted for the examples shown below is to divide the parameter of unseen commands by two. The effects of this can be seen in Figure \ref{fig:lambdaRidder3}. An alternative approach is to add a small fraction of the initial (or previous) parameters to every new parameter estimate. Whatever the strategy, since the parameters are unconstrained it is advisable to normalise them after each iteration (i.e., $\sum_k\lambda_k=\textnormal{const.}$) in order to judge progress.

Once the parameters have converged it is then possible to perform a final set of simulations to estimate the probability of the rare property. The usual assumption is that $N_j\ll N_\textnormal{\tiny IS}\ll N_\textnormal{\tiny MC}$, however it is often the case that parameters converge fast, so it is expedient to use some of the simulation runs generated during the course of the optimisation as part of the final estimation.

\section{Examples}
The following examples are included to illustrate the performance of our algorithm and parametrisation. The first is an example of a chemical system, often used to motivate stochastic simulation, while the second is a standard repair model. All simulations were performed using our statistical model checking platform, PLASMA \cite{Jegourel2012}.
\subsection{Chemical network}
Following the success of the human genome project, with vast repositories of biological pathway data available online, there is an increasing expectation that formal methods can be applied to biological systems. The network of chemical reactions given below is abstract but typical of biochemical systems and demonstrates the potential of statistical model checking to handle the enormous state spaces of biological models. In particular, we demonstrate the efficacy of our algorithm by applying it to quantify two rare dynamical properties of the system.

We consider a well stirred chemically reacting system comprising five reactants (molecules of type $A,B,C,D,E$), a dimerisation reaction (\ref{dimerisation}) and two decay reactions (\ref{decayC},\ref{decayD}):
\begin{eqnarray}
A+B\stackrel{k_1}{\rightarrow} C\label{dimerisation}\\
C\stackrel{k_2}{\rightarrow} D\label{decayC}\\
D\stackrel{k_3}{\rightarrow} E\label{decayD}
\end{eqnarray}
Under the assumption that the molecules move randomly and that elastic collisions significantly outnumber unreactive, inelastic collisions, the system may be simulated using mass action kinetics as a continuous time Markov chain \cite{Gillespie1977}. The semantics of Equation (\ref{dimerisation}) is that if a molecule of type $A$ encounters a molecule of type $B$ they will combine to form a molecule of type $C$ after a delay drawn from an exponential distribution with mean $k_1$. The decay reactions have the semantics that a molecule of type $C$ ($D$) spontaneously decays to a molecule of type $D$ ($E$) after a delay drawn from an exponential distribution with mean $k_2$ ($k_3$). The reactions (\ref{dimerisation},\ref{decayC},\ref{decayD}) are modelled by three guarded commands having importance sampling parameters $\lambda_1,\lambda_2$ and $\lambda_3$, respectively. A typical simulation run is illustrated in Figure \ref{fig:chemsim}, where the x-axis is steps rather than time to aid clarity. $A$ and $B$ combine rapidly to form $C$ which peaks before decaying slowly to $D$. The production of $D$ also peaks while $E$ rises monotonically.
\begin{figure}[ht]
\begin{minipage}[t]{0.455\linewidth}
\centering
\includegraphics[scale=0.5]{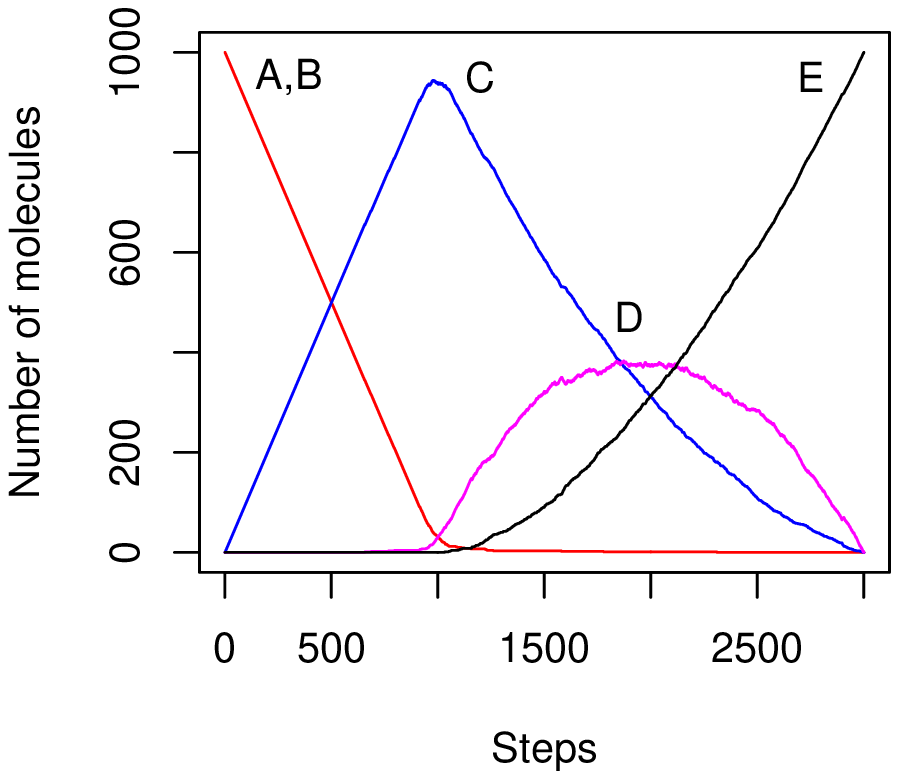}
\caption{A typical stochastic simulation trace of reactions (\ref{dimerisation}-\ref{decayD}).}
\label{fig:chemsim}
\end{minipage}
\hspace{0.5cm}
\begin{minipage}[t]{0.455\linewidth}
\centering
\includegraphics[scale=0.5]{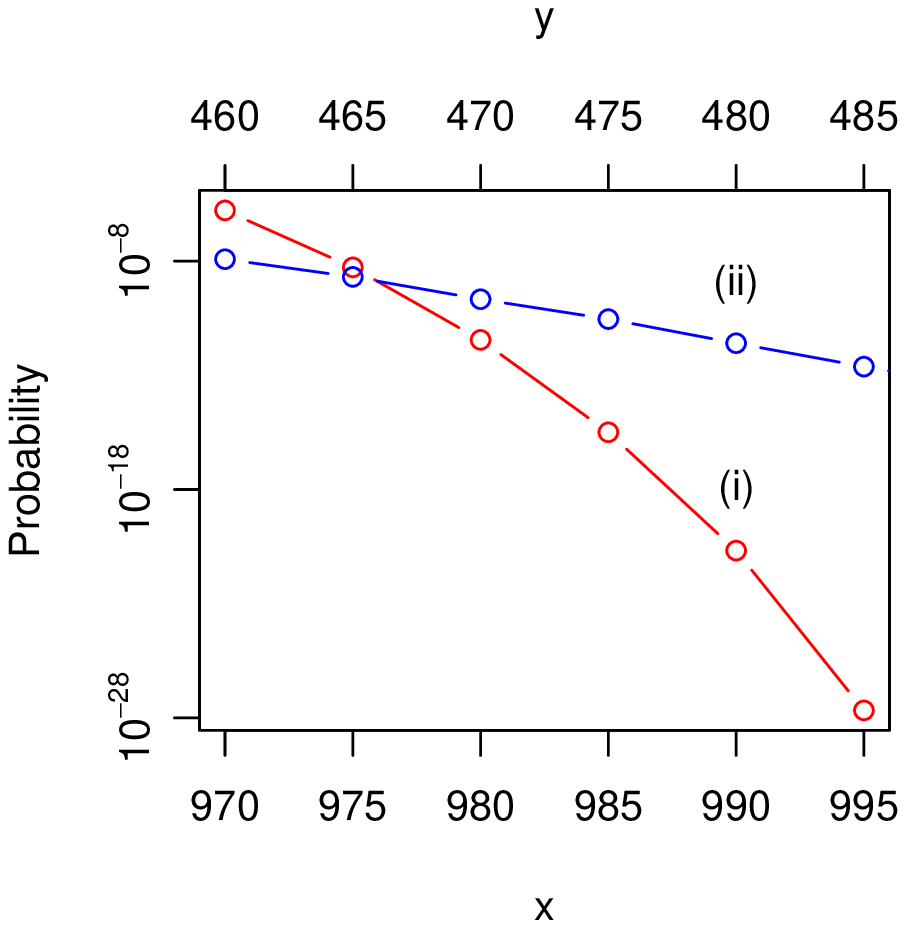}
\caption{(i) Pr$[\Diamond\,C\geq x]$ (ii) Pr$[\Diamond\,D\geq y]$}
\label{fig:chemresults}
\end{minipage}
\end{figure}

With an initial vector of molecules $(1000,1000,0,0,0)$, corresponding to types $(A,B,C,D,E)$, the state space contains $\sim 10^{15}$ states. We know from a static analysis of the reactions that it is possible for the numbers of molecules of $C$ and $D$ to reach the initial number of $A$ and $B$ molecules (i.e., 1000) and that this is unlikely. To find out exactly how unlikely we consider the probabilities of the following rare properties defined in linear temporal logic: (i) $\Diamond\,C\geq x, x\in\{970,975,980,985,990,995\}$ and (ii) $\Diamond\,D\geq y, y\in\{460,465,470,475,480,485\}$. The results are plotted in Figure \ref{fig:chemresults}.

To establish an initial distribution our algorithm (Equation (\ref{eq:algorithm})) was iterated with random parameters and $N_0=1000$ until one or more traces were observed that satisfied the property in question. No more than 10 such iterations were needed in any case. The algorithm was then iterated 20 times using $N_j=1000$. Despite the large state space, this value of $N_j$ was found to be sufficient to produce reliable results. Starting from randomly chosen initial values, the convergence of parameters can be seen in Figure \ref{fig:lambda470}. Figure \ref{fig:trues470} illustrates that the number of paths satisfying a property can actually decrease as the quality of the distribution improves. Figure \ref{fig:gamma470} illustrates the convergence of the estimate and sample variance using the importance sampling parameters generated during the course of running the algorithm. The initial set of parameters appear to give a very low variance, however this is clearly erroneous with respect to subsequent values. Noting that the variance of standard Monte Carlo simulation of rare events gives a variance approximately equal to the probability and assuming that the sample variance is close to the true variance, Figure \ref{fig:gamma470} suggests that we have made a variance reduction of approximately $10^7$. 
\begin{figure}[ht]
\begin{minipage}[t]{0.455\linewidth}
\centering
\includegraphics[scale=0.5]{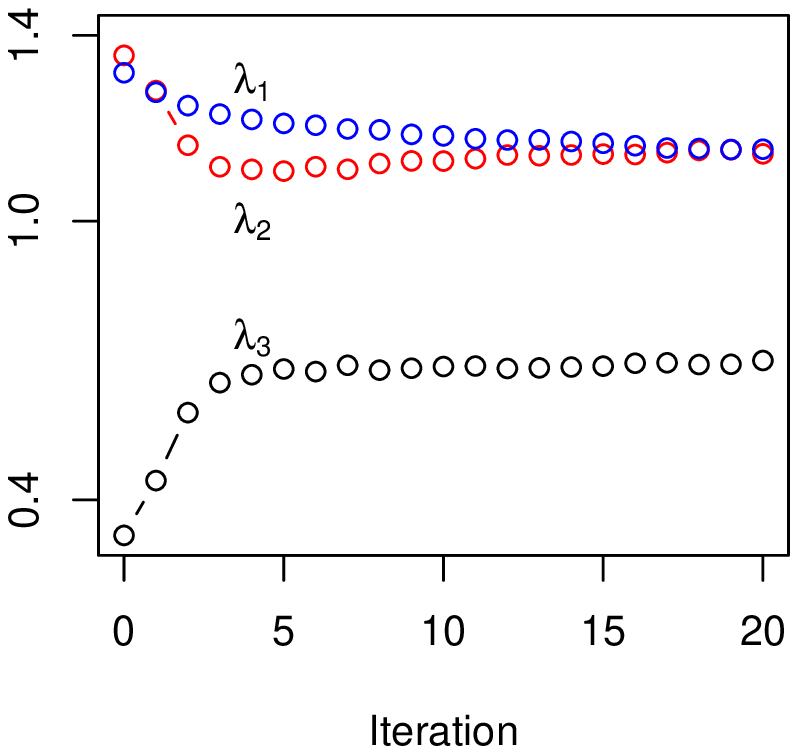}
\caption{Convergence of parameters for $\Diamond\,D\geq 470$ in the chemical model using $N_j=1000$.}
\label{fig:lambda470}
\end{minipage}
\hspace{0.5cm}
\begin{minipage}[t]{0.455\linewidth}
\centering
\includegraphics[scale=0.5]{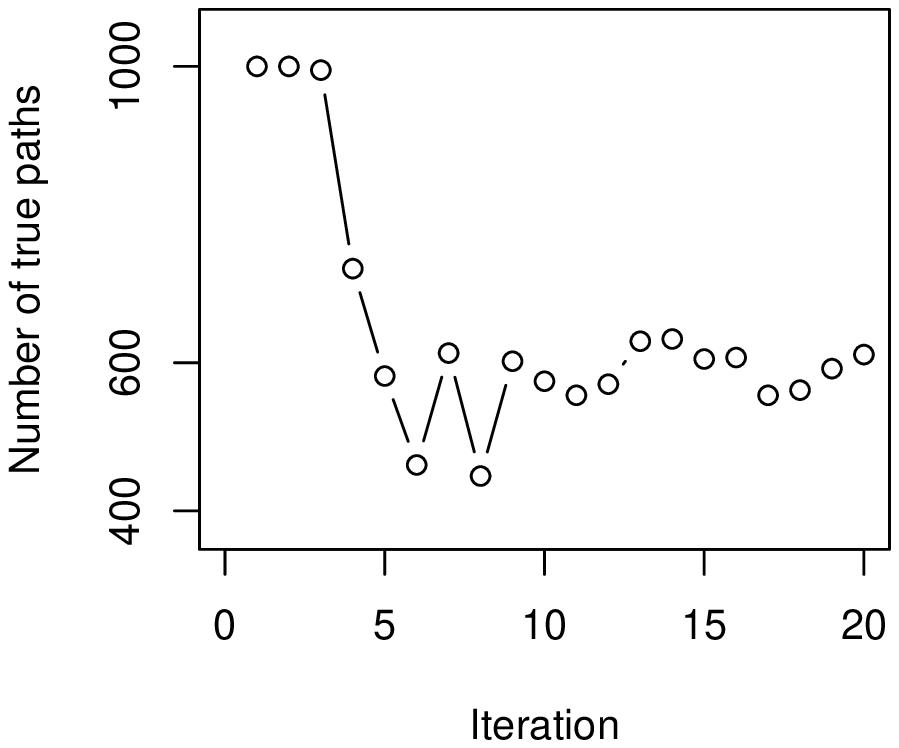}
\caption{Convergence of number of paths satisfying $\Diamond\,D\geq 470$ in the chemical model using $N_j=1000$.}
\label{fig:trues470}
\end{minipage}
\end{figure}

\begin{figure}[ht]
\begin{minipage}[t]{0.455\linewidth}
\centering
\includegraphics[scale=0.5]{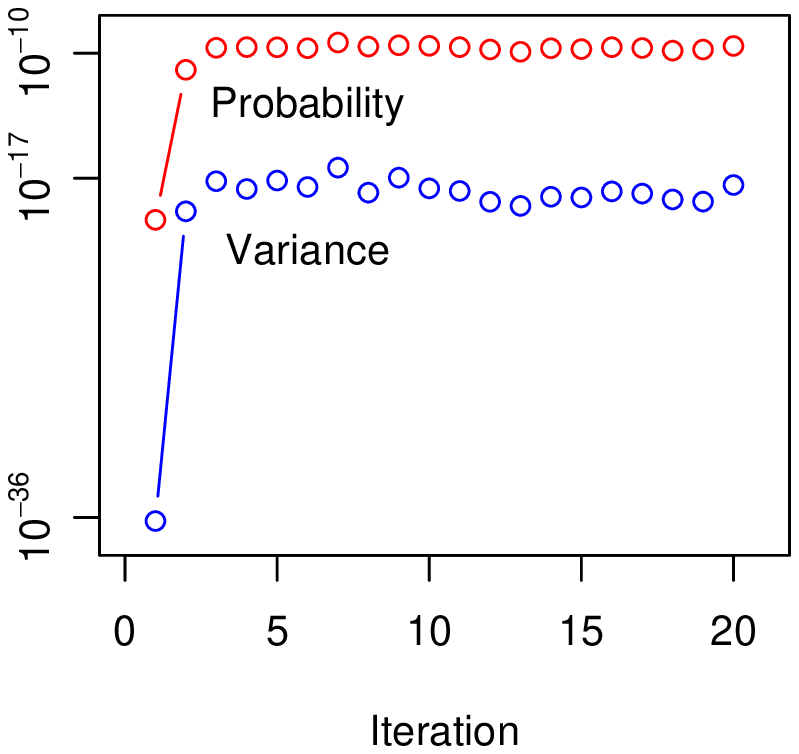}
\caption{Convergence of probability and sample variance for $\Diamond\,D\geq 470$ in the chemical model using $N_j=1000$.}
\label{fig:gamma470}
\end{minipage}
\hspace{0.5cm}
\begin{minipage}[t]{0.455\linewidth}
\centering
\includegraphics[scale=0.5]{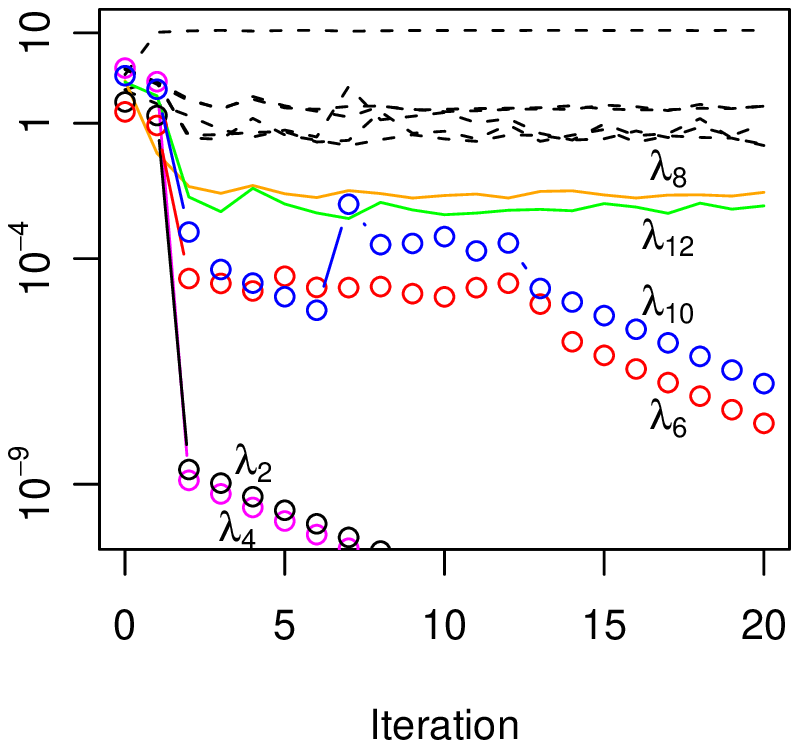}
\caption{Convergence of parameters (dashed/solid lines) and effect of smoothing strategy (circles) in repair model using $N_j=10000$.}
\label{fig:lambdaRidder3}
\end{minipage}
\end{figure}

\subsection{Repair model}
To a large extent the need to certify system reliability motivates the use of formal methods and thus reliability models are studied extensively in the literature. The following example is taken from \cite{Ridder2005} and features a moderately large state space of 40,320 states that can be investigated using numerical methods to corroborate our results.

The system is modelled as a continuous time Markov chain and comprises six types of subsystems $(1,\dots,6)$ containing, respectively, $(5,4,6,3,7,5)$ components that may fail independently. The system's evolution begins with no failures and with various probabilistic rates the components fail and are repaired. The failure rates are $(2.5\epsilon,\epsilon,5\epsilon,3\epsilon,\epsilon,5\epsilon)$, $\epsilon=0.001$, and the repair rates are $(1.0,1.5,1.0,2.0,1.0,1.5)$, respectively. Each subsystem type is modelled by two guarded commands: one for failure and one for repair. The property under investigation is the probability of a complete failure of a subsystem (i.e., the failure of all components of one type), given an initial condition of no failures. This can be expressed in temporal logic as Pr[X($\neg${\em init} U {\em failure})].

Figure \ref{fig:lambdaRidder3} shows the convergence of parameters (dashed/solid lines) and highlights the effects of the adopted smoothing strategy (circles). Parameters $\lambda_2$ and $\lambda_4$ (the parameters for repair commands of types 1 and 2, respectively) are attenuated from the outset by the convergence of the other parameters (because of the normalisation). Once their values are small relative to the normalisation constant (12 in this case), their corresponding commands no longer occur and their values experience exponential decay as a result of smoothing (division by two at every subsequent step). Parameters $\lambda_6$ and $\lambda_{10}$ (the parameters for repair commands of types 3 and 5, respectively) converge for 12 steps but then also decay. The parameters for the repair commands of types 4 and 6 (solid lines) are the smallest of the parameters that converge. The fact that the repair transitions are made less likely by the algorithm agrees with the intuition that we are interested in direct paths to failure. The fact that they are not necessarily made zero reinforces the point that the algorithm seeks to consider {\em all} paths to failure, including those that have intermediate repairs.

Figure \ref{fig:truesRidder3} plots the number of paths satisfying X($\neg${\em init} U {\em failure}) and suggests that for this model the parametrised distribution is close to the optimum. Figure \ref{fig:gammaRidder3} plots the estimated probability and sample variance during the course of the algorithm and superimposes the true probability calculated by PRISM \cite{PRISMwebsite}. The long term average agrees well with the true value (an error of -1.7\%, based on an average excluding the first two estimates), justifying our use of the sample variance as an indication of the efficacy of the algorithm: our importance sampling parameters provide a variance reduction of more than $10^5$.

\begin{figure}[ht]
\begin{minipage}[t]{0.455\linewidth}
\centering
\includegraphics[scale=0.5]{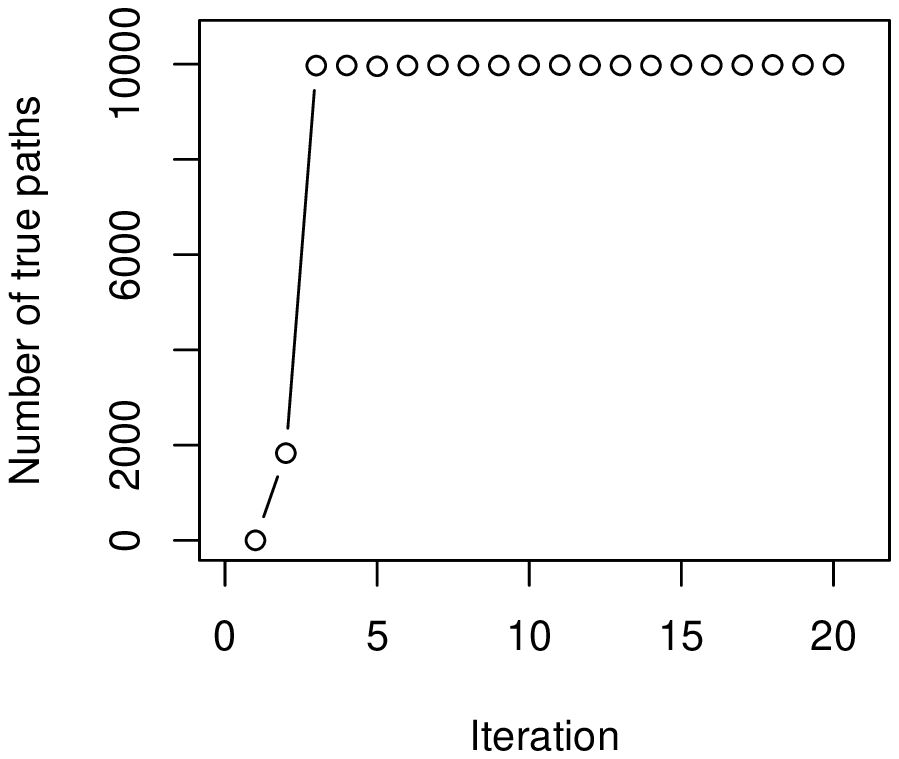}
\caption{Convergence of number of paths satisfying X($\neg${\em init} U {\em failure}) in the repair model using $N_j=10000$.}
\label{fig:truesRidder3}
\end{minipage}
\hspace{0.5cm}
\begin{minipage}[t]{0.455\linewidth}
\centering
\includegraphics[scale=0.5]{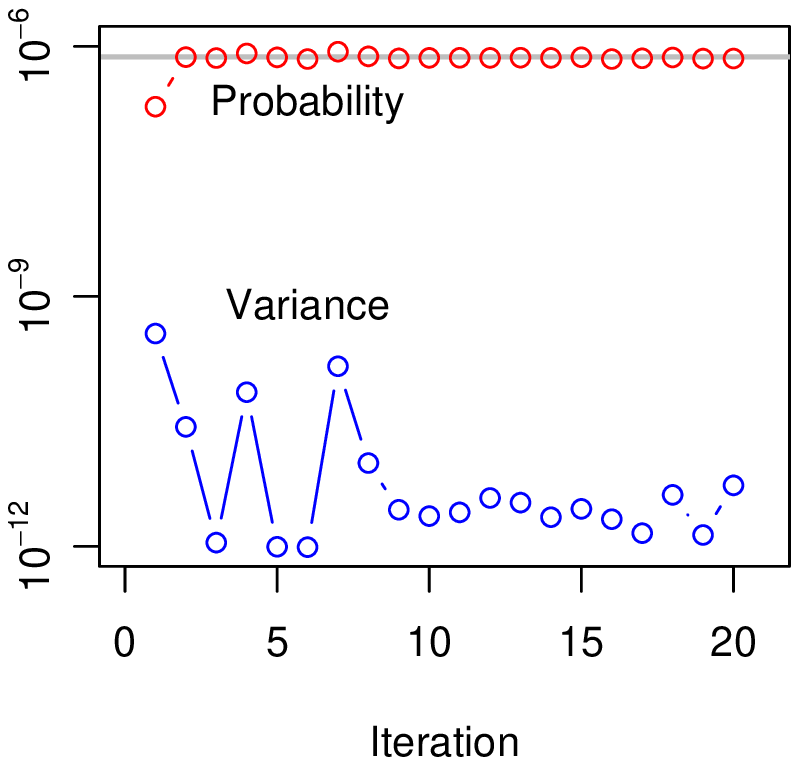}
\caption{Convergence of estimated probability and sample variance for repair model using $N_j=10000$. True probability shown as horizontal line.}
\label{fig:gammaRidder3}
\end{minipage}
\end{figure}

\section{Conclusions and future work}
Statistical model checking addresses the state space explosion associated with exact probabilistic model checking by estimating the parameters of an empirical distribution of executions of a system. By constructing an executable model, rather than an explicit representation of the state space, SMC is able to quantify and verify the performance of systems that are intractable to an exhaustive approach. SMC trades certainty for tractability and often offers the only feasible means to certify real-world systems. Rare properties pose a particular problem to Monte Carlo simulation methods because the properties are difficult to observe and the error in their estimated probabilities is difficult to bound. Importance sampling is a well-established means to reduce the variance of rare events but requires the construction of a suitable importance sampling distribution without resorting to the exploration of the entire state space.

We have devised a natural parametrisation for importance sampling and have provided a simple algorithm, based on cross-entropy minimisation, to optimise the parameters for use in statistical model checking. We have shown that our parametrisation leads to a unique optimum and have demonstrated that with very few parameters our algorithm can make significant improvements in the efficiency of statistical model checking. We have shown that our approach is applicable to standard reliability models and to the kind of huge state space models found in systems biology. We therefore anticipate that our methodology has the potential to be applied to many complex natural and man-made systems.

An ongoing challenge is to find ways to accurately bound the error of results obtained by importance sampling. Specifically, the sample variance of the results may be a very poor indicator of the true variance (i.e. with respect to the unknown true probability). Recent work has addressed this problem using Markov chain coupling applied to a restricted class of models and logic \cite{Barbot2012}, but a simple universal solution remains elusive. A related challenge is to find precise means to judge the quality of the importance sampling distributions we create. Our algorithm finds an optimum based on an automatic parametrisation of a model described in terms of guarded commands. This description is usually derived from a higher level syntactic description that is likely optimised for compactness rather than consideration of importance sampling. As such, there may be alternative, equivalent ways of describing the model that produce better importance sampling distributions. Applying existing work on the robustness of estimators (see, e.g., Chapter 4 in \cite{Rubino2009}), we hope to adapt our algorithm to provide hints about improved parametrisation.

\bibliographystyle{plain}
\bibliography{CAV2012}
\end{document}